\newtheorem{theorem}{Theorem}
\newtheorem{corollary}{Corollary}
\newtheorem{lemma}{Lemma}
\theoremstyle{definition}
\newtheorem{definition}{Definition}
\renewcommand{\vec}{\mathbf}
\newcommand{\set}[1]{\left\{#1\right\}}
\newcommand{\nb}{\text{nb}}
\newcommand{\abs}[1]{\left|#1\right|}
\newcommand{\R}{\mathds{R}}
\title{Perfectly Secure Communication, based on Graph-Topological Addressing in Unique-Neighborhood Networks}
\author{Stefan Rass\thanks{LIT Secure and Correct Systems Lab, Johannes Kepler University Linz, and Institute for Artificial Intelligence and Cybersecurity, Universitaet Klagenfurt, email: \texttt{stefan.rass@jku.at}}}
\begin{document}
\maketitle

\begin{abstract}
We consider network graphs $G=(V,E)$ in which adjacent nodes share common secrets. In this setting, certain techniques for perfect end-to-end security (in the sense of confidentiality, authenticity (implying integrity) and availability, i.e., CIA+) can be made applicable without end-to-end shared secrets and without computational intractability assumptions. To this end, we introduce and study the concept of a \emph{unique-neighborhood network}, in which nodes are uniquely identifiable upon their graph-topological neighborhood. While the concept is motivated by authentication, it may enjoy wider applicability as being a technology-agnostic (yet topology aware) form of addressing nodes in a network.
\end{abstract}

\section{Introduction}
Let a network be given as an undirected graph $G=(V,E)$, in which node adjacency $\set{u,v}\in E$ is characterized by two nodes $u$ and $v$ sharing a common secret (key). Consider the following question:
\begin{quote}
  Can any two nodes $u,v\in V$, which are not adjacent (i.e., $\set{u,v}\notin E$) exchange messages with computationally unconditional privacy, authenticity and reliability?
\end{quote}
By its formulation, an answer cannot use any complexity-theoretic intractability assumptions, thus ruling out public-key cryptographic techniques. Availability is typically a matter of redundancy, and it is known that both, unconditional privacy and availability are both achievable by certain graph connectivity properties and \ac{MPT} based on secret sharing \cite{Wang&Desmedt2008}. In the simplest yet provably most efficient setting \cite{Fitzi&Franklin2007}, Alice sends a message $m$ to Bob according to the following scheme:
\begin{enumerate}
  \item Alice encodes $m$ into a set $s_1,\ldots,s_k$ shares, so that any $d$ out of these $k$ shares suffice to reconstruct the message, while no set of less than $d$ shares leaks any information about $m$. This is a standard application of polynomial $(d,k)$-threshold secret sharing \cite{Shamir1979}.
  \item Alice chooses a set of $k$ node-disjoint paths $\pi_1,\ldots,\pi_k$ from her node to Bob. Herein, two paths $\pi_i,\pi_j$ are said to be \emph{node-disjoint}, if they satisfy $V(\pi_1)\cap V(\pi_2)=\{$Alice, Bob$\}$, where $V(\pi)$ is the vertex set of the path. That is, any two paths intersect nowhere in $G$ except at the end-points.
  \item Bob reconstructs the secret as usual for polynomial secret sharing, potentially recovering from up to $\lfloor(k-d)/2\rfloor$ errors. This recovery is possible by the Welch-Berlekamp algorithm \cite{Berlekamp&Welch1986}, exploiting the known ``isomorphy'' between polynomial secret sharing and Reed-Solomon encoding \cite{McElice&Sarwate1981}.
\end{enumerate}
This form of \ac{MPT} achieves confidentiality against any attacker being able to sniff on $<d$ nodes, by design of the secret sharing. Reliability of the transmission follows from the error correction capability of the sharing treated as an error-correcting code. Note that this protocol does not need any point-to-point encryption, if the attacker is constrained to eavesdrop on nodes only, and on strictly less than $d$ of them. This is what we shall assume w.l.o.g. throughout the paper\footnote{The case of an attacker being able to listen on all the lines in a network requires point-to-point encryption, which, for unconditional security, would call for additional techniques like quantum key distribution protocols (e.g., BB84 \cite{Bennett&Brassard1984}) in the network. We leave such technological extensions aside in this work, and shall remain independent of any such assumptions in stating that shared secrets just ``exist'', without adopting any prescriptions on how this is practically done.}.

The obstacle towards practical implementations of this scheme is the network
needing to provide $k$ node-disjoint paths. This property is, by a Theorem of
H.Whitney \cite[Thm. 5.17]{Chartrand&Zhang2005} equivalent to
$k$-vertex-connectivity of $G$: a graph $G=(V,E)$ is $k$-connected (more
specifically $k$-vertex-connected), if it takes at least $k$ nodes to be
removed from $G$ until the graph becomes disconnected. That is, Alice and Bob
cannot be disconnected by removing up to any $k-1$ nodes between them.
Whitney's theorem equates this condition to the existence of $k$
node-disjoint paths between Alice and Bob. While this is a strong
connectivity requirement in general, 2-connectivity is a highly common
feature of networks for the sake of resilience against single node failures.
The general problem of extending a graph into $k$-connectedness is
computationally intractable \cite{Rass2014}, but the construction of large
$k$-connected graphs from smaller ones is inductively easy \cite{Rass2009b}:
\begin{enumerate}
  \item Start with the $\subseteq$-smallest\footnote{Herein, the \emph{subgraph} relation $G_1\subseteq G_2$ between two graphs $G_1=(V_1,E_1), G_2=(V_2,E_2)$ holds if $V_1\subseteq V_2$ and $E_1\subseteq E_2$.} $k$-connected graph, which is the complete graph $K_{k+1}$.
  \item Given any two $k$-connected graphs $G_1=(V_1,E_1), G_2=(V_2,E_2)$, pick $k$ nodes from each graph, denoted as $u_1,\ldots,u_k\in V_1$ and $v_1,\ldots,v_k\in G_2$, and form the connected graph $H=(V_1\cup V_2$, $E_1\cup E_2$ $\cup \{(u_i,v_i):$ $i=1, \ldots, k\})$. Then $H$ is again $k$-connected.
\end{enumerate}
Thus, if the network is to be constructed from scratch in a hierarchical fashion, maintaining it $k$-connected is a simple matter of proper connections between subnetworks into the bigger network.

Authenticity is a different story, but achievable along similar lines: \ac{MPA} resembles the common form of how hand-written signatures are verified in companies. Departments typically maintain samples of the handwritten signature of a decision maker to verify it on paper documents (otherwise, anyone could just scrawl some name and claim it to be someone else's handwritten signature). The digital version of this procedure uses the point-to-point shared secrets to mimic a ``signature'' by a conventional \ac{MAC}. Specifically, if Alice's node $v$ has a neighbor set $\nb(v)=\set{w\in V: \set{v,w}\in E}$, and shares a key $s_i$ with each neighbor $i\in\nb(v)$. She can use these to undersign a message using a set of \acp{MAC} under the keys $s_1,\ldots,s_k$ for $k\leq\abs{\nb(\text{Alice})}$. Bob, upon reception of the \acp{MAC}, can ask Alice's neighbors for verification, and properly react upon their replies (along node-disjoint paths again). This protocol is depicted in Figure \ref{fig:mpa} and, using techniques of game theory and universal hashing for the functions $MAC$ and $h$, is provably secure without any computational intractability assumptions \cite{Rass&Schartner2010}. It must be noted that the vertex-connectivity number $k$ of the graph needs not be equal to the number of neighbors or paths used; it must only be large enough to admit the sought number of neighbors/paths. As such, the number (here $k$) of paths can be less than the vertex-connectivity number of the graph (also denoted as $k$ here).

\begin{figure}
  \centering
  \includegraphics[width=0.7\textwidth]{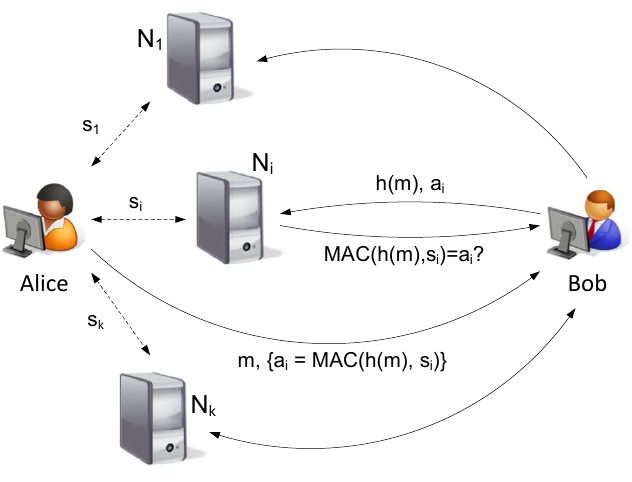}
  \caption{Multipath authentication -- Example scheme}\label{fig:mpa}
\end{figure}

The remaining question concerns the uniqueness of Bob in being the ``signer''
of the message in the \ac{MPA} scheme. Is it only Alice that could have
attached the \ac{MAC} set $\set{MAC(h(m), s_i): i\in\nb(\text{Alice})}$?
Apparently so, if the neighborhood of Alice \emph{uniquely} characterizes,
resp. distinguishes, her from all other nodes in the network. This, however,
is a nontrivial property of a graph, and in the center of study in this work
hereafter.

\section{Problem Statement}
We study the problem of characterizing a node $v\in V$ based only on its graph-topological neighborhood, i.e., we are interested in graphs with the following property:
\begin{definition}[Unique-Neighborhood Network]
A graph $G=(V,E)$ is a \emph{\ac{wUNN}}, if the mapping $v\in V\mapsto
\nb(v):=\set{w: \set{v,w}\in E}$ is injective. We call it a \emph{(strong)
\ac{UNN}}, if no neighborhood is a subset of another node's neighborhood.
That is, for every $v$, there is some $u\in\nb(v)$ with $u\notin\nb(w)$ for
all $w\neq v$.
\end{definition}

The existence of such graphs is immediate by simple examples, such as lines
(Figure \ref{fig:unn-line}), circles, or the complete graph. The property,
however, may arise or vanish upon adding edges. For instance, the graph in
Figure \ref{fig:wunn} is a \ac{wUNN}, but loses this property upon adding the
edge $\set{2,4}$ to it, as in Figure \ref{fig:non-unn}. It regains
unique-neighborhoods, however, when the edges $\set{1,3}$ and $\set{2,4}$ are
added (Figure \ref{fig:unn-complete-graph}).

The distinction between weak and (strong) \ac{UNN} is necessary because the
latter lend themselves better to authentication matters: looking at the graph
in Figure \ref{fig:wunn}, node 2 has neighbor set $\set{1}$ which is
contained in the neighborhood $\set{1,3}$ of node 4 as well, so node 4 could
use a subset of its neighbors to mimic being node 2, based on neighbor sets
(only). Likewise, node 1 could pretend being node 3, because it has, among
others, also the neighbors that node 3 knows, and could use those nodes for
impersonating 3. Node 3, in turn, could not do this, as long as 1 uses its
full neighbor set of authentication. This possibility vanishes if no
neighborhood is a strict subset of another neighborhood.

\begin{figure}[b!]
  \centering
  \subfloat[\ac{UNN}: Line graph]{\phantom{XXXXXXXXXXX}\includegraphics[scale=1]{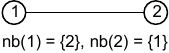}\label{fig:unn-line}\phantom{XXXXXXXXXXX}}

  \subfloat[\ac{UNN}: Complete graph $K_4$]{\phantom{XXXXXXXXXXX}\includegraphics[scale=1]{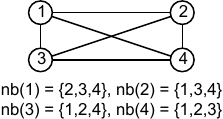}\label{fig:unn-complete-graph}\phantom{XXXXXXXXXXX}}

  \subfloat[A weak \ac{UNN}]{\phantom{XXXXXXXXXXX}\includegraphics[scale=1]{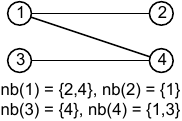}\label{fig:wunn}\phantom{XXXXXXXXXXX}}

  \subfloat[Not a \ac{UNN}: the complete bipartite graph $K_{2,2}$]{\phantom{XXXXXXXXXXX}\includegraphics[scale=1]{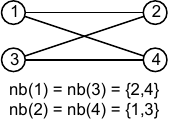}\label{fig:non-unn}\phantom{XXXXXXXXXXX}}

  \caption{Examples}\label{fig:examples}
\end{figure}

\section{An Sufficient Algebraic Condition for \ac{UNN}}\label{sec:algebraic-conditions}

Hereafter, let $n=\abs{V}$ be the number of nodes in $G$. Let $\vec
A\in\set{0,1}^{n\times n}$ be the graph's adjacency matrix, with $a_{ij}=1$
whenever node $i$ is connected to node $j$ and zero otherwise. Each
row/column of the matrix thus corresponds to a node, and the $i$-th row in
$\vec A$ can be taken as a vector of indicators, describing the neighbourhood
of node $i$. Likewise, since $G$ is undirected and $\vec A$ is hence
symmetric, the same goes for the columns of $\vec A$.

Furthermore, since we are considering \ac{UNN} for matters of authentication,
we assume the graph to have no loops (which would correspond to a party
self-certifying the validity of its own \ac{MAC}), no multi-edges (as the
connection as such counts, not how many cables connect two instances), and to
be connected (as isolated nodes could not communicate and hence have no need
for authentication). Thus, we hereafter consider graphs that are
\emph{simple}, i.e., without loops or multi-edges, and connected. Since we
also consider lines for bidirectional communication (necessarily for the
protocol above to work), we do not consider directed graphs hereafter.

A \ac{UNN} has pairwise distinct rows/columns, so it is conceptually
straightforward to sort the rows of $\vec A$ in any order, and to look for
adjacent identical rows. While the problem is algorithmically easy, we can
also give an algebraic condition to induce the \ac{UNN} property. Several
sufficient conditions are immediate to imply distinct rows, such as full rank
of the adjacency matrix $\vec A$, or the matrix being orthogonal. Those,
however, is also necessary, so these are overly strong for our purposes.

In a more direct approach, we can directly ask for pairwise distinctness, and
it turns out that this is easy to cast into an algebraic condition: let $\vec
x_i$ be the $i$-th row of the adjacency matrix $\vec A$. Consider the $ij$-th
entry $a_{ij}^{(2)}$ in $\vec A\cdot \vec A^T=\vec A^2$ (the bracketed
superscript in $a_{ij}^{(2)}$ shall be a reminder that we indeed do not just
square $a_{ij}$, since actually $a_{ij}^{(2)}=\sum_{k=1}^n a_{ik} a_{kj}=\vec
x_i^T \vec x_j$).

If $i\neq j$, then two cases are possible:
\begin{enumerate}
\item if $\vec x_i=\vec x_j$, then we can write $a_{ij}^{(2)}=\vec x_i^T \vec x_i$
\item Otherwise, if $\vec x_i\neq \vec x_j$, then the sum giving $\vec
    x_i^T \vec x_j$ will be such that at least one 1-entry in $\vec x_i$
    ``matches'' with a 0-entry in $\vec x_j$ (or vice versa), since
    otherwise, the two vectors would be identical. Thus, the sum $\vec
    x_i^T \vec x_j$ must be strictly less than $\vec x_i^T \vec x_i$ by at
    least 1, since by the strong \ac{UNN} condition, there must be a
    neighbor $k$ of node $i$ that node $j$ does not have, hence the 1-entry
    at position $k$ in $\vec x_i$ hits a zero entry in the same position in
    $\vec x_j$, and vice versa. Hence, $\vec x_i^T\vec x_j<\|\vec x_i\|_1$
    and also $\vec x_i^T \vec x_j<\|\vec x_j\|_1$.
\end{enumerate}
Now, look into the first row of $\vec A^2=\vec A^T \vec A=\vec A\vec A^T$: in
this row, we have all scalar products $\vec x_1^T \vec x_i$, for which the
result is always $\leq \vec x_1^T \vec x_1$, with equality if and only if
$\vec x_i=\vec x_1$. That is, we can test for another copy of $\vec x_1$ to
exist by looking if all entries are less than the first. This is already half
the test: the matrix $\vec A$ has pairwise distinct rows if and only if the
diagonal element in the $i$-th row of $\vec A^2$ is strictly greater than all
other elements in that row (likewise, column).

We can convert this into an algebraic inequality condition: write
$\mathbf{1}_{n\times m}$ to mean the $(n\times m)$-matrix of all 1-entries,
and observe that $\vec x_i^T \vec x_i=\vec x_i^T \mathbf{1}_{n\times 1}$.
Thus, we can create the diagonal entry in $\vec A^2$ alternatively by
multiplying with a vector of all 1es. Extending this idea, note that the
entry $\vec x_i^T \vec x_i$ appears in all columns along the $i$-th row in
$\vec A\cdot \mathbf{1}_{n\times n}$. Our condition asks for the diagonal to
be strictly larger than the other elements in the same row, and all we have
left to do ``artificially increase'' the actual diagonal by 1 so that the
$<$-condition holds there too (otherwise we would have equality).

Overall, we end up with the following sufficient condition:
\begin{theorem} A simple graph $G=(V,E)$ with $n=\abs{V}$ nodes is a \ac{UNN}, if its adjacency matrix $\vec A$ satisfies
\begin{equation}\label{eqn:unn-condition}
  \vec A\cdot \mathbf{1}_{n\times n}+ \vec I_n-\vec A^2\geq \mathbf{1}_{n\times n}.
\end{equation}
where $\vec I_n$ is the $n$-th identity matrix and the inequality holds per element.
\end{theorem}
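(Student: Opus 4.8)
The plan is to unwind the matrix inequality entrywise and match it against the combinatorial characterization already derived in the surrounding text, namely that $G$ is a \ac{UNN} iff for every row $i$ the diagonal entry $a_{ii}^{(2)}$ of $\vec A^2$ strictly exceeds every off-diagonal entry $a_{ij}^{(2)}$ in that row. So first I would simply compute the $(i,j)$-entry of the left-hand side of \eqref{eqn:unn-condition}. Since $\vec A$ is a $0/1$ matrix, the $i$-th row of $\vec A\cdot\mathbf{1}_{n\times n}$ has every entry equal to $\deg(i)=\sum_{k}a_{ik}=\vec x_i^T\vec x_i=a_{ii}^{(2)}$ (using that $\vec x_i$ is $0/1$, so its squared norm equals its sum). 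The $(i,j)$-entry of $\vec I_n$ is $\delta_{ij}$, and the $(i,j)$-entry of $\vec A^2$ is $a_{ij}^{(2)}=\vec x_i^T\vec x_j$. Hence the inequality \eqref{eqn:unn-condition} reads, for all $i,j$,
\begin{equation*}
  a_{ii}^{(2)}+\delta_{ij}-a_{ij}^{(2)}\;\geq\;1.
\end{equation*}

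Next I would split into the diagonal and off-diagonal cases. For $i=j$ the condition becomes $a_{ii}^{(2)}+1-a_{ii}^{(2)}=1\geq 1$, which holds trivially and vacuously for every graph — so the diagonal entries carry no information, exactly as the "artificial +1" discussion anticipated. For $i\neq j$ the condition becomes $a_{ii}^{(2)}-a_{ij}^{(2)}\geq 1$, i.e. $\vec x_i^T\vec x_j\leq \vec x_i^T\vec x_i-1$, which (since all quantities are integers) is equivalent to the strict inequality $a_{ij}^{(2)}<a_{ii}^{(2)}$. Thus \eqref{eqn:unn-condition} holds iff for every ordered pair $i\neq j$ the diagonal entry of $\vec A^2$ in row $i$ strictly dominates the $(i,j)$-entry.

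It then remains to argue that this entrywise domination is equivalent to injectivity of $v\mapsto\nb(v)$, and this is precisely what the two-case analysis in the text before the theorem establishes: writing $\vec x_i,\vec x_j$ as indicator vectors of $\nb(i),\nb(j)$, one has $\vec x_i^T\vec x_j=|\nb(i)\cap\nb(j)|\leq |\nb(i)|=\vec x_i^T\vec x_i$, with equality iff $\nb(i)\subseteq\nb(j)$; combined with the symmetric statement, equality in \emph{both} $a_{ij}^{(2)}=a_{ii}^{(2)}$ and $a_{ji}^{(2)}=a_{jj}^{(2)}$ forces $\nb(i)=\nb(j)$. Conversely if $\nb(i)=\nb(j)$ then $\vec x_i=\vec x_j$ and $a_{ij}^{(2)}=a_{ii}^{(2)}$, violating the strict inequality. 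So the strict-domination condition for all $i\neq j$ is equivalent to: no two distinct nodes have the same neighborhood, i.e. $G$ is a \ac{UNN}. Chaining the equivalences gives the theorem.

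I would structure the write-up as: (1) entrywise reduction of \eqref{eqn:unn-condition}; (2) dispatch of the $i=j$ case as vacuous; (3) integrality argument turning $\geq 1$ into strict $<$ for $i\neq j$; (4) invoke the pre-theorem lemma-style discussion to identify strict domination with distinctness of rows, hence with injectivity of the neighborhood map. The only mildly delicate point — the "main obstacle" — is the direction showing that row-distinctness of $\vec A$ forces the strict inequality, because one must rule out the possibility $\nb(i)\subsetneq\nb(j)$ with $a_{ij}^{(2)}=a_{ii}^{(2)}$ yet $\vec x_i\neq\vec x_j$; but since $\vec x_i^T\vec x_j=|\nb(i)\cap\nb(j)|$ and $\vec x_i^T\vec x_i=|\nb(i)|$ are equal exactly when $\nb(i)\subseteq\nb(j)$, and symmetrically for row $j$, one of the two rows (say row $i$) still has all off-diagonal entries strictly below its diagonal unless $\nb(i)=\nb(j)$ — here I should be careful to phrase the condition as holding for \emph{every} row, so that a one-sided containment is still caught by looking at the larger neighborhood's row. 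This is the step I would write out most carefully; everything else is a routine substitution.
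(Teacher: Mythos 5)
Your steps (1)--(3) --- the entrywise reduction of \eqref{eqn:unn-condition} to the requirement $a_{ii}^{(2)}-a_{ij}^{(2)}\geq 1$ for every ordered pair $i\neq j$ --- are correct and coincide with the paper's own derivation. The gap sits exactly at the point you yourself flag as the ``main obstacle'' in step (4): you must show that distinctness of the rows forces the strict inequality, i.e.\ rule out $\nb(i)\subseteq\nb(j)$ with $\nb(i)\neq\nb(j)$. Your proposed way out --- that the row of the larger neighborhood still exhibits strict domination, so one should ``phrase the condition as holding for every row'' --- does not close it: \eqref{eqn:unn-condition} is demanded at \emph{every} entry, in particular at the entry $(i,j)$ lying in the row of the \emph{smaller} neighborhood, and under a one-sided containment that entry equals $\deg(i)+0-\abs{\nb(i)\cap\nb(j)}=0$, violating the required $\geq 1$. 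What your double-equality argument actually establishes is that \eqref{eqn:unn-condition} is equivalent to the stronger property that no neighborhood is contained in another; this implies the \ac{UNN} property but is not implied by it, so only the ``if'' direction of the theorem comes out of this route.

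Moreover, the missing ``only if'' direction is not merely unproven but false, so no repair of step (4) can exist: take the path $1-2-3-4$, i.e.\ the line graph that the paper itself exhibits as a \ac{UNN} in Figure~\ref{fig:unn-line}. Its neighborhoods $\set{2},\set{1,3},\set{2,4},\set{3}$ are pairwise distinct, yet $\nb(1)=\set{2}\subseteq\set{2,4}=\nb(3)$ gives $a_{13}^{(2)}=1=a_{11}^{(2)}$, so the $(1,3)$ entry of the left-hand side of \eqref{eqn:unn-condition} is $\deg(1)+0-a_{13}^{(2)}=0<1$: the graph is a \ac{UNN} but violates the condition. (The paper's pre-theorem case~2 contains the same oversight your proposal inherits and then tries to patch: when $\vec x_i\neq\vec x_j$ it need not happen that a 1-entry of $\vec x_i$ meets a 0-entry of $\vec x_j$; the mismatch can lie entirely in positions where $\vec x_i$ is 0, and then $\vec x_i^T\vec x_j=\vec x_i^T\vec x_i$.) A correct algebraic test must compare each pair through both rows simultaneously, e.g.\ $a_{ii}^{(2)}+a_{jj}^{(2)}-2a_{ij}^{(2)}\geq 1$ for all $i\neq j$, which counts the size of the symmetric difference of the two neighborhoods and in matrix form reads $\vec A\,\mathbf{1}_{n\times n}+\mathbf{1}_{n\times n}\vec A-2\vec A^2+2\vec I_n\geq\mathbf{1}_{n\times n}$; if you want to salvage your write-up, prove that statement instead, and keep your entrywise reduction, which transfers verbatim.
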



A similar condition can be obtained for weak \ac{UNN}, only considering that
some node $i$ may have a neighbor set that covers that of another node $j$.
Like above, consider the product $\vec x_i\cdot\vec x_j$, and let
$\nb(i)\subset\nb(j)$, then the 1es in $\vec x_i$ all match with 1es in $\vec
x_j$, except for a few more 1-entries that $\vec x_j$ may have. Hence, $\vec
x_i^T\vec x_j= \vec x_i^T\vec x_i$, since all of $i$'s neighbors go into the
count (as $\vec x_j$ has a 1-entry for every 1-entry in $\vec x_i$).
Conversely, $\vec x_j$ will have a neighbor $k$ that is not counted upon
multiplying with $\vec x_i$, since $k\notin\nb(i)$, and therefore we have
$\vec x_i^T\vec x_j<\vec x_j^T\vec x_j$. The two neighborhoods are thus
distinct if and only if at least one of the two inequalities is strict, or by
adding them, $2\vec x_i^T\vec x_j<\vec x_i^T\vec x_i+\vec x_j^T\vec x_j$, and
equality is only possible if the two vectors are identical, as in that case,
$\nb(i)\subseteq\nb(j)$ and $\nb(j)\subseteq\nb(i)$ so that the neighborhood
is the same. 


%

Constructing optimal networks w.r.t. some cost function $c:\set{0,1}^{n\times
n}\to\R$ on the adjacency matrix is then a matter of constrained nonlinear
optimization:
\begin{align}
    \left.\begin{array}{ll}
    \min_{\vec A\in\set{0,1}^{n\times n}}&c(\vec A)\\
    \text{subject to }&\eqref{eqn:unn-condition}.
    \end{array}\right\}\label{eqn:optimization}
\end{align}

One instance of
\eqref{eqn:optimization} could be, for example, looking for the smallest
\ac{UNN} in which a given graph $G$ appears as a subgraph. Conversely, we can
look for the largest subgraph inside $G$ that is a \ac{UNN}. This is done in
the next section.

\section{Construction and Encounter of \acp{UNN}}

It is highly unlikely that random graphs, e.g., scale-free or others come up
as \acp{UNN}. Taking the internet as an example of a scale-free topology,
just consider an \ac{ISP} with a set of customers. Each customer is typically
connected to only one \ac{ISP}, making this node the only and hence
non-unique neighborhood of the customer. However, an \acp{ISP} located in the
center of a star topology may consider its customers as a unique neighborhood
to another \ac{ISP}, unless the two have identical sets of customers.

Though a graph $G=(V,E)$ may not be a \ac{UNN}, is there perhaps a subgraph
$G'=(V',E')$ with $V'\subseteq V, E'\subseteq E$ that is a \ac{UNN}? A
positive answer is reached by looking at trees first. First, let us define
the \emph{degree} of a node $v$ in an undirected graph is the number
$\abs{\nb(v)}$.
\begin{lemma}\label{lem:trees-have-unn-subgraphs}
Let $G=(V,E)$ be a tree, and let $S$ be the set of all nodes having degree 1 in $G$. Then, all nodes in $V\setminus S$ have unique neighborhoods in (the full graph) $G$.
\end{lemma}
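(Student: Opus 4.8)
The plan is to prove the contrapositive form: the map $v\mapsto\nb(v)$, restricted to $V\setminus S$, is injective, and moreover a non-leaf can never share its neighborhood with a leaf, so that a non-leaf's neighborhood is genuinely unique among \emph{all} of $V$. The leaf-versus-non-leaf case is disposed of by a cardinality count: a leaf $v\in S$ has $\abs{\nb(v)}=1$, while every $u\in V\setminus S$ has $\abs{\nb(u)}\geq 2$ by the definition of $S$, so $\nb(u)\neq\nb(v)$. Thus the real content is to take two \emph{distinct} non-leaves $u,v\in V\setminus S$ and derive a contradiction from the assumption $\nb(u)=\nb(v)$.

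First I would rule out the possibility that $u$ and $v$ are adjacent: if $\set{u,v}\in E$, then $v\in\nb(u)=\nb(v)$, which would be a self-loop at $v$, impossible in a simple graph. (This step uses nothing about $G$ being a tree.) So $u$ and $v$ are non-adjacent. Now I would use the degree hypothesis: since $u\notin S$, it has two distinct neighbors $a,b\in\nb(u)$, and by assumption $a,b\in\nb(v)$ as well. The key bookkeeping step is to check that $u,v,a,b$ are four pairwise distinct vertices: $a\neq b$ by choice, $a,b\neq u$ since they are neighbors of $u$, and $a,b\neq v$ because otherwise $u$ and $v$ would be adjacent, contradicting the previous paragraph. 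Consequently the four edges $\set{u,a},\set{a,v},\set{v,b},\set{b,u}$ all lie in $E$ and form a cycle $u-a-v-b-u$ of length $4$ in $G$.

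This contradicts the hypothesis that $G$ is a tree, since trees are acyclic; hence no two distinct non-leaves can have equal neighborhoods, and together with the cardinality argument this shows every node of $V\setminus S$ has a unique neighborhood in $G$. I do not expect a serious obstacle; the only point needing care is verifying the distinctness of $u,v,a,b$ before invoking the $4$-cycle, and observing that the hypothesis ``degree $2$'' (rather than merely ``degree $1$'') is exactly what the argument consumes — the path on three vertices already shows that two leaves can legitimately share their unique common neighbor, so leaves must indeed be excluded from the claim.
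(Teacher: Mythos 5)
Your proof is correct, but it runs along a different line than the paper's. The paper argues via the rooted-tree structure: two inner nodes may share a parent, but their child sets are nonempty and disjoint (every node has a unique parent), so their neighborhoods differ; note that the paper's proof only establishes injectivity of $v\mapsto\nb(v)$ on $V\setminus S$ (with $\nb$ computed in the full graph) and does not compare inner nodes against leaves. You instead argue by contradiction through acyclicity: two distinct nodes $u,v$ of degree $\geq 2$ with $\nb(u)=\nb(v)$ must be non-adjacent (else $v\in\nb(v)$, a self-loop), and then two common neighbors $a,b$ yield the $4$-cycle $u$--$a$--$v$--$b$--$u$, which a tree cannot contain; your distinctness bookkeeping for $u,v,a,b$ is complete. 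This buys two things: it is tighter than the paper's somewhat informal parent/child argument (which glosses over the choice of root and the case where one inner node is the other's parent), and it actually uses only the absence of $4$-cycles, so it proves the stronger and more general fact that in any $C_4$-free graph the vertices of degree $\geq 2$ have pairwise distinct neighborhoods. Your additional cardinality step ($\abs{\nb(v)}=1$ for a leaf versus $\abs{\nb(u)}\geq 2$ for an inner node) even gives uniqueness among \emph{all} of $V$, which is more than the paper's proof spells out, and is the reading most useful for the later corollary. The paper's approach, in exchange, directly exhibits a distinguishing neighbor (a child), which is the picture its Corollary \ref{cor:unn-subgraph} construction builds on. One negligible caveat: in the degenerate one-vertex tree, $V\setminus S$ contains a vertex of degree $0$, so your blanket claim that every node of $V\setminus S$ has degree $\geq 2$ fails there; the statement is vacuously true in that case, so nothing breaks, but a parenthetical remark would make the argument airtight.
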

\begin{proof}
The nodes in $S$ are all leafs in the tree, and their exclusion leaves the
mapping $f:v\mapsto \nb(v)$ restricted to inner nodes only. Consequently, we
consider injectivity of $f$ only on the set $V\setminus S$, but with the
neighborhood $\nb$ being determined by the full set $V$. For any two inner
nodes $u,v$, those may have the same parents, but since $G$ is a tree, they
have distinct children, thus making the corresponding neighborhood (composed
from parents and children, possibly also from $S$) pairwise distinct.
\end{proof}

Lemma \ref{lem:trees-have-unn-subgraphs} is not optimal in the sense that a leaf that is singleton (in the sense of having no ``siblings'', i.e., sharing its parent with no other vertex in $G$) can be included, and the graph remains a \ac{UNN}. More generally, we can even connect any two disjoint \acp{UNN} by a single or multiple edges, with the resulting graph again being a \ac{UNN}.

\begin{lemma}\label{lem:unn-connection}
Let $G_1=(V_1,E_1), G_2=(V_2,E_2)$ be \acp{UNN}, with $V_1\cap
V_2=\emptyset$. Select any two distinct nodes $u\in V_1,v\in V_2$ and
construct the graph $H=(V_1\cup V_2, E_1\cup E_2\cup\set{\set{u,v}})$. Then,
$H$ is a \ac{UNN}.
\end{lemma}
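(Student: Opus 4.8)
The plan is to check directly that the neighbourhood map $f_H\colon w\mapsto\nb(w)$ of $H$ is injective, splitting according to which of $V_1,V_2$ contains two given distinct vertices $x,y$. The fact that drives everything is that the inserted edges run only between $V_1$ and $V_2$; hence, for $x\in V_1$ one has $\nb_H(x)\cap V_1=\nb_{G_1}(x)$ and $\nb_H(x)\cap V_2=\set{v_i:u_i=x}$, and symmetrically for $x\in V_2$. Since the $u_i$ (and the $v_i$) are pairwise distinct, the ``crossed'' part $\set{v_i:u_i=x}$ has at most one element.

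The two same-side cases are then immediate: if $x,y\in V_1$, then $\nb_H(x)\cap V_1=\nb_{G_1}(x)\neq\nb_{G_1}(y)=\nb_H(y)\cap V_1$ because $G_1$ is a \ac{UNN}, so $\nb_H(x)\neq\nb_H(y)$; the case $x,y\in V_2$ is identical with $G_2$ in place of $G_1$. Thus the matching edges can never collapse two neighbourhoods living on the same side.

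The case $x\in V_1$, $y\in V_2$ is, I expect, the only delicate one, and is where I would concentrate the effort. Intersecting $\nb_H(x)=\nb_H(y)$ with $V_1$ and with $V_2$ separately forces $\nb_{G_1}(x)=\set{u_i:v_i=y}$ and $\nb_{G_2}(y)=\set{v_i:u_i=x}$; both right-hand sides have at most one element, so a collision requires $\deg_{G_1}(x)\le 1$ and $\deg_{G_2}(y)\le 1$. Here ``$G_1,G_2$ are \acp{UNN}'' is not by itself sufficient: joining the single edges $\set{a,b}$ and $\set{c,d}$ by the matching $\set{a,c},\set{b,d}$ produces the $4$-cycle $K_{2,2}$, which the paper itself records as \emph{not} a \ac{UNN}. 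The clean way to close the case is therefore to add a mild hypothesis, such as $G_1$ and $G_2$ having minimum degree at least $2$; under it the two displayed equalities are impossible, the $V_1\times V_2$ case is vacuous, and $f_H$ is injective.

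So the routine part is the same-side bookkeeping, while the crux --- and the main obstacle --- is the $V_1\times V_2$ case, which moreover reveals that the statement as written is slightly too strong and should be proved under (and restated with) a condition that rules out the low-degree endpoints producing $C_4$.
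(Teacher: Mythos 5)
Your decomposition is correct, and it is complete precisely where the paper's own proof is not: the printed one-line argument amounts to your same-side bookkeeping (each neighborhood gains at most one vertex from the opposite side, which cannot occur in the original graph, so neighborhoods within $V_1$, and within $V_2$, stay pairwise distinct), but it never compares $\nb_H(x)$ with $\nb_H(y)$ for $x\in V_1$, $y\in V_2$. Your handling of that cross case is also right: intersecting the equal neighborhoods with $V_1$ and with $V_2$ forces $\deg_{G_1}(x)\le 1$ and $\deg_{G_2}(y)\le 1$, and your counterexample shows this configuration is realizable, so the lemma as printed is \emph{false}, not merely under-proved. Two disjoint single edges are \acp{UNN}, yet the $k=2$ matching turns them into $K_{2,2}$, the very graph shown in Figure \ref{fig:non-unn} as a non-\ac{UNN}, and the colliding pairs there are exactly cross pairs. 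Even $k=1$ can fail: attaching an isolated vertex (trivially a \ac{UNN}) to one endpoint of a single edge yields the $3$-vertex path, whose two endpoints have identical neighborhoods.

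Your proposed repair is sound, and your own intersections show it can be weakened: it suffices that \emph{one} of $G_1,G_2$ has minimum degree at least $2$, since a cross collision needs degree at most $1$ on both sides simultaneously; the exact dividing line (no $i\neq j$ with $\nb_{G_1}(u_i)=\set{u_j}$ and $\nb_{G_2}(v_j)=\set{v_i}$, plus the degenerate variants involving isolated vertices) can likewise be read off your case analysis. The added hypothesis is harmless for the paper's stated purpose of growing \acp{UNN} that are simultaneously $k$-connected from blocks $K_{k+1}$ with $k\ge 2$, since such blocks have minimum degree $\ge k$. It does matter, however, for Corollary \ref{cor:unn-subgraph}, where the lemma is invoked with $G_2$ a single vertex of degree $0$; that step is not covered by the minimum-degree version and needs either the sharper condition or a short direct argument exploiting the particular choice of $v$ there.
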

\begin{proof}
Since we connect a node $u\in V_1$ to $v\in V_2$ (only), the neighborhoods of
both are just extended by the other. But since neither node appears in the
other graph, the neighborhoods remain distinct. The remaining nodes that were
not involved in the connection between $G_1, G_2$ retain their neighborhoods
as they were, which are again unique since the graphs were vertex-disjoint.
\end{proof}
Unfortunately, Lemma \ref{lem:unn-connection} does not lend itself to a
greedy construction algorithm for a node- or edge-minimal \ac{UNN}, since the
property is non-monotone in terms of the subgraph relation, as the example
graphs in Figure \ref{fig:examples} show. Thus, the set of all \acp{UNN} does
not form a matroid. Given two \acp{UNN}, we can connect them with a single
edge to form one graph that is a \ac{UNN}, and from that point onwards add
edges to $H$, as long as condition \eqref{eqn:unn-condition} on the adjacency
matrix tells that we retain a \ac{UNN}.

More interesting for our purposes is the corollary from these results:
\begin{corollary}\label{cor:unn-subgraph}
Every undirected graph $G=(V,E)$ has a vertex-maximal subgraph $G'=(V',E')$ that is a \ac{UNN}, and which only excludes nodes (if any) of degree 1 in $G$.
\end{corollary}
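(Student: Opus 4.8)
The plan is to combine Lemma~\ref{lem:trees-have-unn-subgraphs} with Lemma~\ref{lem:unn-connection}, first reducing to connected graphs. If $G$ has connected components $G_1,\dots,G_m$, it suffices to exhibit in each $G_j$ on at least two vertices a \ac{UNN} subgraph $G_j'$ omitting only vertices of degree~$1$, and to take $G'$ as the disjoint union of the $G_j'$ (a component that is a single isolated vertex being a trivial case, of which at most one copy can be retained). This union is again a \ac{UNN}: the $G_j'$ are vertex-disjoint, and since each is connected on at least two vertices it has no isolated vertex, so no node of one piece can share its neighbourhood with a node of another; and by construction the vertices of $G$ missing from $G'$ are exactly the omitted leaves.

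For a connected $G_j$ with at least two vertices, fix a spanning tree $T$. By Lemma~\ref{lem:trees-have-unn-subgraphs} all non-leaf vertices of $T$ already have pairwise distinct neighbourhoods within $T$, so the only neighbourhood collisions possible in $T$ are between leaves sharing a parent. Starting from $G_j' := T$, I would repair these class by class: among the sibling leaves of a fixed parent keep one as representative; delete the remaining siblings that genuinely have degree~$1$ in $G_j$ (which is permitted); and for every remaining sibling $\ell$ with degree $\ge 2$ in $G_j$, re-insert into $G_j'$ a suitable non-tree edge of $G_j$ incident to $\ell$, pushing $\nb(\ell)$ off the singleton it shared with its siblings. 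One then checks --- via Lemma~\ref{lem:trees-have-unn-subgraphs} for the surviving inner vertices and the uniqueness of parents for the surviving leaves --- that the result is a \ac{UNN} containing every vertex of $G_j$ of degree $\ge 2$. Finally, to get vertex-maximality, I would repeatedly attach any still-excluded (hence degree-$1$) vertex through one of its $G$-edges whenever injectivity of $v \mapsto \nb(v)$ is preserved; this terminates in a \ac{UNN} subgraph that cannot be enlarged and whose missing vertices all have degree~$1$ in $G$.

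The step I expect to be the main obstacle is the re-insertion of non-tree edges: it must be done so that no two repaired siblings are pushed onto the \emph{same} new neighbourhood (two of them re-attached to a common vertex would still collide) and so that re-attaching $\ell$ does not spoil the neighbourhood-uniqueness of an inner vertex. Since the \ac{UNN} property is non-monotone under edge addition --- exactly as the examples in Figure~\ref{fig:examples} show --- arguing that a collision-free choice of re-attachment always exists is the delicate core of the argument; the remainder is routine verification.
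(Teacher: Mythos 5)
Your plan founders at exactly the step you flag as the ``delicate core'', and the problem is worse than delicacy: the statement you are trying to establish --- that every vertex of degree $\geq 2$ in $G$ can be retained, so that only $G$-degree-$1$ vertices are ever excluded --- is false in general, so no clever choice of re-inserted non-tree edges can make the repair succeed. Take $G=K_{2,5}$ with sides $\set{x,y}$ and $\set{t_1,\ldots,t_5}$: every vertex has degree at least $2$, yet each $t_i$ has all of its $G$-neighbours inside $\set{x,y}$. In any subgraph on all seven vertices, each $\nb(t_i)$ is one of the four subsets of $\set{x,y}$ (including $\emptyset$), so by pigeonhole two of the $t_i$ receive the same neighbourhood and the map $v\mapsto\nb(v)$ cannot be injective. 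Hence no subgraph of $K_{2,5}$ containing all degree-$\geq 2$ vertices is a \ac{UNN}, and your re-attachment phase must fail there no matter how the non-tree edges are chosen. This is why the gap you acknowledge (``arguing that a collision-free choice of re-attachment always exists'') cannot be filled; the remainder of your argument, however routine, does not rescue it.

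For comparison, the paper's proof is much more modest than your reading: it keeps only the inner vertices of a spanning tree $T$ and re-attaches exactly one leaf per leaf-parent (justified by Lemma~\ref{lem:unn-connection}), so the vertices it discards are leaves of $T$ --- degree $1$ \emph{in the tree}, not necessarily in $G$ --- and maximality is argued only against adding a second sibling, whose neighbourhood would again be the singleton $\set{v}$. In the $K_{2,5}$ example the paper's construction likewise drops vertices of $G$-degree $2$, which shows that the corollary's phrase ``degree 1 in $G$'' can only be sustained in this weaker, tree-relative sense; your proposal, by contrast, commits itself to the literal strong reading and therefore to an impossible task. Two smaller remarks: your reduction to connected components and the observation that in a tree collisions occur only among sibling leaves are fine; but your closing greedy step only shows that your particular $G'$ cannot be extended by attaching one more vertex through one edge while keeping the edges already chosen, which is weaker than vertex-maximality over all \ac{UNN} subgraphs on larger vertex sets --- for degree-$1$ vertices this can be repaired by noting their only possible nonempty neighbourhood is the singleton consisting of their unique neighbour, which is essentially the paper's maximality argument.
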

\begin{proof}
Take $T=(V_T, E_T)$ as any spanning tree in $G$, then Lemma \ref{lem:trees-have-unn-subgraphs} tells that every inner node of $T$ has unique neighborhoods. Let $V_T^o$ be the set of inner nodes in $T$, and let $E'=(V_T^o\times V_T^o)\cap E$ be the induced edge set. Let $v\in V_T^o$ be an arbitrary node for which all $u\in\nb(v)$ have degree 1. That is, $v$ is a(ny) node that is directly connected to leaf nodes in $T$. For each such node $v$, we can pick exactly one of its children $c\in V_T$ (a leaf) arbitrarily, and add the edge $\set{v,c}$ to $E'$. By Lemma \ref{lem:unn-connection}, the so-extended tree remains a \ac{UNN}. Moreover, it is a vertex-maximal such subgraph, since adding any further leaf node connected to some $v\in V_T^o$, we would end up with two children $c_1, c_2$ of $v$ whose common neighborhood is $\set{v}$. Since $T$ was spanning, there are no other nodes that we could add.
\end{proof}
Obviously by construction, Corollary \ref{cor:unn-subgraph} only assures a
\ac{UNN} subgraph of ``minimal'' vertex-connectivity as being a tree. To
construct a $k$-connected \ac{UNN}, Corollary \ref{cor:unn-subgraph} is
apparently not very useful as the inner \ac{UNN} is only 1-connected. Its
primary applicability is rather to provide a starting point for optimization
like in \eqref{eqn:optimization}: if $G$ is not a \ac{UNN}, we can pick a
maximal subgraph of $G$ that is a \ac{UNN}, and keep adding edges (from $G$
or new ones), until $G$ has been extended into the ``smallest'' \ac{UNN} that
covers $G$.

\section{Graph-Topological Addressing and Security}\label{sec:taa}
The existence of a vertex-maximal subgraph being a \ac{UNN} is not at all surprising, but the important fact is that the network is a \ac{UNN} until the ``last mile'' to the customer, who typically is a node of degree 1. Routing messages into such nodes, however, makes sense only if the node itself is already the receiver, so the actual addressing is only needed until the hop right before the final node. Once this butlast node has been reached, the final point-to-point connection needs no further addressing.

Corollary \ref{cor:unn-subgraph} is thus the final key to ``technology
agnostic addressing'' in the sense as we look for: let $G=(V,E)$ be any
network that is not necessarily a \ac{UNN}. Within $G$, we can construct a
minimal spanning tree (by known algorithms), which relative to the entire
network $G$ is already a \ac{UNN}. All nodes outside this \ac{UNN} are
excluded only for sharing a common connection point into the \ac{UNN} $T$.
Now, consider the nodes in $T$ as \acp{ISP}, arranged in some hierarchical
structure that the tree reflects, then the nodes outside $T$ are all
customers connected to the same \ac{ISP}, but these are directly reachable
from their individual \ac{ISP} node. Thus, the only nodes with a non-unique
neighborhoods are the \acp{ISP}'s customers, while within the larger
(inter)net, nodes can be addressed purely using their localities.

The addressing of nodes based on their neighborhoods is \emph{agnostic} of technology, but in the same degree needs to be \emph{aware} of topology. As such it may not be applicable in certain specialized domains such as on-chip networks or ad hoc networks. The main area of application are hence fixed network installations such as those maintained by \acp{ISP}.

This technological/graph-topological effort comes with a significant practical advantage from the perspective of \emph{usable security}: it requires key-management in the sense of exchanging common secrets, only between a considerably small number of nodes. Precisely, while symmetric end-to-end encryption in a network of $n$ nodes would require $O(n^2)$ keys to be exchanged, multipath schemes as described here require only $\abs{E}$ such keys to be shared between direct neighbors, while still providing end-to-end security without computational intractability. More important from a practical perspective is the fact that the key-management does not need to rest with Alice or Bob as users: unlike public-key encryption that relies on complex certificate management that to a wide extent runs on the application layer (and hence is in Alice and Bob's direct hands), multipath transmission and \ac{UNN} maintenance (i.e., establishment, broadcast and updates to neighborhoods) can run entirely below the application layer, thus providing addressing and confidentiality in a technology agnostic form and completely transparent for all users. Especially matters of maintaining a \ac{UNN} are flexible and do not need to take into account the entire network topology: an implicit point made in the proof of Corollary \ref{cor:unn-subgraph} is the fact that we \emph{do not} need all graph-topological neighbors, while it suffices to include only a selection of them to define unique neighborhoods. This also extends to the key-management: for \ac{MPA}, it is only necessary to share keys with the neighbors relevant for the addressing, but not with all neighbors that may physically exist. Thus, the lot of key material maintained by the network additionally shrinks, since the ``inner'' \ac{UNN} is merely a spanning tree inside the actual network, on which shared keys are required. Since a tree with $n$ nodes has $m=n-1$ edges, we end up with $O(n)$ keys necessary for confidential and authentic end-to-end security, as opposed to $O(n^2)$ in the conventional setting of symmetric encryption.

\section{Discussion}
Multipath transmission schemes in \acp{UNN} offer the remarkable possibility of doing something like public-key cryptography without public-key cryptography: the sender Alice can send an authentic and confidential message to Bob, with both properties provably implied by graph-topological features only, and not resting on (unproven) computational intractability. Moreover, Alice and Bob use only publicly available information for that purpose, with all matters of symmetric cryptography being handled below the application layer (thus, the crypto is entirely transparent). Unlike computational intractability, an \ac{ISP} can easily assure these conditions to hold in a publicly verifiable manner, and without bothering its customers Alice or Bob with any key-management at all. This may especially be interesting in emerging networks such like the \ac{IoT}.

This note leaves a set of problems open for future research, such as properties and complexity of the optimization problem \eqref{eqn:optimization}, and practical (algorithmic) matters of routing. For ad hoc networks, it is interesting to compute the likelihood of a random graph to be a \ac{UNN}. This property will not necessarily emerge suddenly unlike other graph properties, due to lack of monotonicity. However, the non-monotonicity in connection with the application in the \ac{IoT} and ad hoc networking may render \acp{UNN} interesting objects to study in the context of random graphs.

\subsection*{Acknowledgment}
The author thanks Julian Gaggl for pointing out some corrections in earlier
versions of this draft.

\bibliographystyle{abbrv}

\begin{acronym}
\acro{UNN}{Unique-neighborhood network}%
\acro{wUNN}{weak Unique-neighborhood network}%
\acro{ISP}{Internet service provider}%
\acro{MPT}{Multipath transmission}%
\acro{MPA}{Multipath authentication}%
\acro{MAC}{Message authentication code}%
\acro{IoT}{internet-of-things}%
\end{acronym}

\end{document}